\DeclareMathSymbol{\shortminus}{\mathbin}{AMSa}{"39}
\definecolor{purple}{RGB}{128,0,128}
\definecolor{ultramarine}{RGB}{63, 0, 255}
\definecolor{medblue}{RGB}{0, 0, 100}
\definecolor{panblue}{RGB}{0,24,150}
\definecolor{carmine}{RGB}{150, 0, 24}
\definecolor{gray}{RGB}{150, 150, 150}
\definecolor{darkred}{RGB}{200, 0, 0}
\definecolor{darkgreen}{RGB}{0, 80, 0}
\definecolor{darkblue}{RGB}{0, 0, 200}
\definecolor{darkorange}{rgb}{1.0, 0.55, 0.0}
\definecolor{nred}{rgb}{0.9,0.1,0.1}
\definecolor{nblack}{rgb}{0,0,0}
\definecolor{nblue}{rgb}{0.2,0.2,0.8}
\definecolor{ngreen}{rgb}{0.2,0.6,0.2}
\definecolor{darkestblue}{rgb}{0, 0, 0.3}
\newcommand{\tr}[1]{\mathsf{tr}[#1]}
\newcommand{\herm}{\mathsf{Herm}(\mathcal{H})}
\newcommand{\Z}{\mathbb{Z}}
\newcommand{\C}{\mathbb{C}}
\newcommand{\neutralize}[1]{\expandafter\let\csname c@#1\endcsname\count@}
\let\olddagger\dagger
\renewcommand{\dagger}{\ensuremath{\olddagger}\xspace}
\theoremstyle{plain}
\newtheorem*{main theorem}{Main Theorem}
\newtheorem{theorem}{Theorem}[]
\newtheorem{lemma}[theorem]{Lemma}
\newtheorem{example*}[theorem]{Example*}
\newtheorem{examples*}[theorem]{Examples*}
\newtheorem{remark*}[theorem]{Remark*}
\newtheorem*{search problem}{Search Problem}
\def\bR{\begin{color}{red}}
\def\bB{\begin{color}{blue}}
\def\bM{\begin{color}{magenta}}
\def\bC{\begin{color}{cyan}}
\def\bW{\begin{color}{white}}
\def\bBl{\begin{color}{black}}
\def\bG{\begin{color}{green}}
\def\bY{\begin{color}{yellow}}
\def\e{\end{color}\xspace}
\newcommand{\bit}{\begin{itemize}}
\newcommand{\eit}{\end{itemize}\par\noindent}
\newcommand{\ben}{\begin{enumerate}}
\newcommand{\een}{\end{enumerate}\par\noindent}
\newcommand{\beq}{\begin{equation}}
\newcommand{\eeq}{\end{equation}\par\noindent}
\newcommand{\beqa}{\begin{align*}}
\newcommand{\eeqa}{\end{align*}}
\newcommand{\beqn}{\begin{align}}
\newcommand{\eeqn}{\end{align}\par\noindent}
\def\jR{\begin{color}{black}}
\def\jB{\begin{color}{black}}
\def\jM{\begin{color}{magenta}}
\def\jC{\begin{color}{cyan}}
\def\jW{\begin{color}{white}}
\def\jBl{\begin{color}{black}}
\def\jG{\begin{color}{green}}
\def\jY{\begin{color}{yellow}}
\begin{document}

\title{Uniqueness of Noncontextual Models for Stabilizer Subtheories}

\author{David Schmid}
\affiliation{Perimeter Institute for Theoretical Physics, 31 Caroline Street North, Waterloo, Ontario Canada N2L 2Y5}
\affiliation{Institute for Quantum Computing, University of Waterloo, Waterloo, Ontario N2L 3G1, Canada}
\affiliation{International Centre for Theory of Quantum Technologies, University of Gda\'nsk, 80-308 Gda\'nsk, Poland}
\author{Haoxing Du}
\affiliation{Department of Physics, University of California, Berkeley, Berkeley, California 94720, USA}
\author{John H. Selby}
\affiliation{International Centre for Theory of Quantum Technologies, University of Gda\'nsk, 80-308 Gda\'nsk, Poland}
\author{Matthew F. Pusey}
\affiliation{Department of Mathematics, University of York, Heslington, York YO10 5DD, United Kingdom}

\begin{abstract}

We give a complete characterization of the (non)classicality of all stabilizer subtheories. 
First, we prove that there is a {\em unique} nonnegative and diagram-preserving quasiprobability representation of the stabilizer subtheory in all odd dimensions, namely Gross's discrete Wigner function. This representation is equivalent to Spekkens' epistemically restricted toy theory, which is consequently singled out as the unique noncontextual ontological model for the stabilizer subtheory.
Strikingly, the principle of noncontextuality is powerful enough
(at least in this setting) to single out {\em one particular} classical realist interpretation. 
Our result explains the practical utility of Gross's representation by showing that (in the setting of the stabilizer subtheory) negativity in this particular representation implies generalized contextuality. Since negativity of this particular representation is a necessary resource for universal quantum computation in the state injection model, it follows that generalized contextuality is also a necessary resource for universal quantum computation in this model.
In all even dimensions, we prove that there does not exist any nonnegative and diagram-preserving quasiprobability representation of the stabilizer subtheory, and, hence, that the stabilizer subtheory is contextual in all even dimensions.  
\end{abstract}
\maketitle

Quantum computers have the potential to outperform classical computers at many tasks. One of the major outstanding problems in quantum computing is to understand what physical resources are necessary and sufficient for universal quantum computation. These resources may depend on one's model of computation~\cite{PhysRevLett.91.147902,PhysRevA.72.042316,PhysRevLett.81.5672}, and in some cases it seems that neither entanglement nor even coherence is required in significant quantities~\cite{PhysRevA.72.042316}. 

The primary obstacle to building a quantum computer is that implementing low-noise gates is difficult in practice. 
While there are no gate sets which are easy to implement and also universal~\cite{eastin2009restrictions}, the entire stabilizer subtheory~\cite{gottesman1997stabilizer,gottesman1998heisenberg} can in fact be implemented in a fault-tolerant manner  relatively easily. To promote the stabilizer subtheory to universal quantum computation, one must supplement it with additional nonstabilizer (or `magic') processes. Because these nonstabilizer resources do not have a straightforward fault-tolerant implementation, they are typically noisy. To get around this problem, Bravyi and Kitaev~\cite{bravyi2005universal} introduced the magic state distillation scheme, whereby fault-tolerant stabilizer operations are used to distill pure resource states out of the initially noisy resources.
However, not every nonstabilizer resource can be distilled in this fashion to generate a state which promotes the stabilizer subtheory to universal quantum computation. It is a major open question to determine which states are in fact sufficient for this purpose.

Quasiprobability representations are a central tool for making progress on these and related problems. For finite-dimensional quantum systems, a number of quasiprobability representations have been studied. For example, Gibbons, Hoffman, and Wootters (GHW) identified a family of representations on a discrete phase space~\cite{gibbons2004discrete}, and Gross then singled out one of these with a higher degree of symmetry~\cite{gross2006hudson}, by virtue of satisfying a property known as ``Clifford covariance''. All of these have been used to study quantum computation~\cite{PhysRevA.71.042302,PhysRevLett.109.230503,veitch2012negative,veitch2014resource,howard2014contextuality,PhysRevA.95.052334,PhysRevX.5.021003,PhysRevA.83.032310}. 

Gross's representation in particular has been the most useful in understanding the resources required for computation.
For instance, Ref.~\cite{veitch2012negative} extended the Gottesman-Knill theorem~\cite{gottesman1998heisenberg} by devising an explicit simulation protocol for quantum circuits composed of Clifford gates supplemented with arbitrary states and measurements that have nonnegative Gross's representation. Ref.~\cite{veitch2012negative} also proved that every state which is useful for magic state distillation necessarily has negativity in its Gross's representation. In Ref.~\cite{howard2014contextuality}, this result was leveraged to prove that every state that promotes the stabilizer subtheory to universal quantum computation via magic state distillation must also exhibit Kochen-Specker contextuality~\cite{KS}. In recognition that negativity in Gross's representation is a resource for quantum computation in this sense, Ref.~\cite{veitch2014resource} introduced an entire resource theory~\cite{coecke2016mathematical} of Gross's negativity. 

From a foundational perspective, it is surprising that any {\em particular} quasiprobability representation plays such a central role. As argued in Ref.~\cite{negativity}, negativity of any one quasiprobability representation is not sufficient to establish nonclassicality in general scenarios. So how can it be that Gross's representation plays such an important role, e.g. that negativity in it is associated to a strong form of nonclassicality, namely computational speedups?
Early clues were provided by Gross~\cite{gross2006hudson} and by Zhu~\cite{zhu}, each of whom proved that Gross's representation was the unique representation with some natural symmetry properties. However, it has previously been unclear what these properties have to do with nonclassicality, and both Gross's and Zhu's arguments relied on auxiliary mathematical assumptions that were not physically motivated (as we discuss below).

In this paper, we resolve this mystery by showing that the {\em only} nonnegative and diagram-preserving~\cite{schmid2020structure} quasiprobability representation of the stabilizer subtheory in any odd dimension is Gross's.
We also prove that in all even dimensions  (where Gross's representation is not defined), there is {\em no} nonnegative and diagram-preserving quasiprobability representation of the stabilizer subtheory. This implies that the stabilizer subtheory exhibits generalized contextuality in all even dimensions. 

In the setting of the full stabilizer subtheory, our result for odd dimensions proves that negativity {\em of this particular} quasiprobability representation is a rigorous signature of nonclassicality, i.e., the failure of generalized noncontextuality. \emph{Generalized noncontextuality} is a principled~\cite{Spe05,schmid2019characterization,SpekLeibniz19}, useful~\cite{schmid2018contextual,AWV,KLP19,POM,RAC,RAC2,Saha_2019,cloningcontext,comp1,comp2, YK20,saha2019preparation,PhysRevLett.125.230603}, and operational~\cite{operationalks,unphysical, Kunjwal16,Schmid2018,pusey2019contextuality,selby2021accessible,selby2022open} notion of classicality.
If one's process has negativity in Gross's representation, then our result establishes that there is no nonnegative representation of the full stabilizer subtheory together with that process. Since nonnegative quasiprobability representations are in one-to-one correspondence with generalized noncontextual ontological models~\cite{negativity,schmid2019characterization,schmid2020structure}, this means that there is no noncontextual representation for the scenario, and hence no classical explanation of it\footnote{Note that Ref.~\cite{schmid2020unscrambling} introduced a more refined framework for studying ontological models and noncontextuality, and argued that better terminology for these are `classical realist representations' and `Leibnizianity', respectively. We do not use this framework or terminology here only so that our results are easier to parse for readers who have not read Ref.~\cite{schmid2020unscrambling}.}. 

Our work also extends the body of known connections between contextuality, negativity, and computation~\cite{howard2014contextuality,veitch2014resource,AndersBrowne,PhysRevA.95.052334,frembs2018contextuality,PhysRevX.5.021003,PhysRevLett.119.120505, Delfosse2017,haferkamp2021equivalence,booth2021contextuality}.
Using known links between resources for quantum computation and negativity in Gross's representation, together with our result connecting such negativity to the failure of generalized noncontextuality, one can derive connections between resources for quantum computation and generalized noncontextuality.

We illustrate this by giving an analogue of the celebrated result in Ref.~\cite{howard2014contextuality}: namely, we prove that generalized contextuality is necessary for universal quantum computation in the state injection model.

Finally, we note that our main result demonstrates that the principle of generalized noncontextuality is a much stronger principle than was previously recognized, at least in some settings. This is exemplified by the fact that
for stabilizer theories in odd dimensions, it does not merely provide constraints on ontological representations, it {\em completely fixes} the ontological representation.  This offers some hope that if the notion of a generalized noncontextual model can be relaxed in such a way~\cite{schmid2020unscrambling} that lifts the obstructions to modelling the entirety of quantum theory, such a model of the full theory might also be unique. In our view such a uniqueness result would offer a compelling reason to take the identified ontology seriously.

\emph{The stabilizer subtheory---} 
The stabilizer subtheory is one of the most important subtheories of quantum theory in the field of quantum information, playing an important role in quantum computing~\cite{gottesman1997stabilizer,gottesman1998heisenberg,bravyi2005universal,howard2014contextuality,PhysRevA.83.020302,PhysRevA.70.052328}, quantum error correction~\cite{gottesman1997stabilizer,gottesman1998heisenberg,riste2015detecting,Terhal2015,PhysRevLett.95.230504}, and quantum foundations~\cite{spekkens2016quasi,pusey2012stabilizer,catani2017spekkens,PhysRevA.98.052108,Lillystone2019,lillystone2019contextual}. We introduce it\footnote{ Note that an alternative definition of the stabilizer subtheory in prime power dimensions that appears in literature utilizes the finite field $\mathbb{F}_{p^k}$ of $p^k$ elements. For example, this is the definition of the stabilizer subtheory studied by Zhu~\cite{zhu}. The two definitions coincide for prime $d = p$, but are distinct for prime power $d = p^k$, $k \neq 1$. In this work, we will follow Gross's definition and concern ourselves only with the stabilizer subtheories defined using the residue field $\mathbb{Z}_d$ for all odd $d$.} briefly here, with more details in Appendix~\ref{app:stab}.

The stabilizer subtheory is built around the Clifford unitaries.
To define these, we first introduce the \emph{Weyl operators} (also called generalized Pauli operators).
Consider a $d$-dimensional quantum system with computational basis $\{ \ket{0}, \ldots, \ket{d-1} \}$.  Writing $\omega = \exp(\frac{2\pi i}{d})$, we define the translation operator $X$ and boost operator $Z$ via 
\begin{align}
    X\ket{x} &= \ket{x+1} &
    Z\ket{x} &= \omega^x\ket{x}.
\end{align}
Note that here and throughout, all arithmetic is within $\Z_d$, the integers modulo $d$.
The single-system Weyl operators are then defined as
$
   W_{p,q} =  Z^p X^q,
$
where $p,q\in \mathbb{Z}_d$. Note that these are often defined with an additional phase factor $\omega^{\gamma_{p,q}}$; however, 
the resulting operational theory is the same for any valid~\cite{Delfosse2017} phase choice, so we will set $\gamma_{p,q}$ to zero.
The Clifford unitaries are defined as unitaries which---up to a phase---map Weyl operators to other Weyl operators under conjugation.

The stabilizer subtheory for a single system in dimension $d$ is defined as the set of processes which can be generated by sequential composition of: i) pure states uniquely identified by being the simultaneous eigenstates of a given set of Weyl operators, ii) projective measurements in the spectral decomposition of the Weyl operators\footnote{Note that although the Weyl operators are not Hermitian operators, they {\em are} normal operators, and hence have a spectral decomposition, which implies one can carry out a projective measurement in the eigenbasis of each.}, and iii) Clifford unitary superoperators on the associated Hilbert space, as well as convex mixtures of such processes.

This construction is easily generalized to allow for parallel composition, that is, for systems made up of $n$ qu$d$its\footnote{To the authors' knowledge, parallel composition of systems of different dimensions has not been considered in the literature.}, by defining the multiparticle Weyl operators as tensor products of those defined above, and defining the multiparticle Clifford operators as unitary superoperators that preserve the multiparticle Weyl operators under conjugation;  see Ref.~\cite{gross2006hudson} for more details. 
An important feature is that in general the stabilizer subtheory defined by parallel composition of $n$ qu$d$its is not the same as the stabilizer subtheory defined by a single $d^n$ dimensional system---for instance, the latter generally has far fewer states~\cite{gross2006hudson}. Therefore, for a given dimension $D$ there may be multiple different stabilizer theories which could be associated to it, depending on whether one views it as a single monolithic system of dimension $D$ (which Gross calls the single-particle view), or views it as some tensor product of multiple qu$d$its (which Gross calls a multi-particle view).


\emph{Quasiprobability representations---}
A quasiprobability representation~\cite{ferrie2008frame,van2017quantum,schmid2020structure} is akin to a mathematical representation of quantum processes as stochastic processes on a sample space, except that the representation may take negative values. 
For the reasons laid out in Refs.~\cite{schmid2020structure,schmid2020unscrambling}, we are only interested in quasiprobability representations that satisfy the assumption of \emph{diagram preservation}~\cite{schmid2020structure,schmid2020unscrambling}---namely, that the representation commutes with sequential and parallel composition of processes. 
This assumption is satisfied by most of the useful quasiprobability representations considered in the literature, including the standard (continuous-dimensional) Wigner function and Gross's representation.

The arguments of Ref.~\cite{schmid2020structure} imply that every diagram-preserving quasiprobability representation of a full dimensional subtheory\footnote{That is, in which the states span the quantum state space and the effects span the quantum effect space. Note that the stabilizer subtheory is such a theory, which can be seen by noting that the Weyl operators span the space of Hermitian operators, and hence, so do their eigenstates.} of quantum theory can be written as a minimal frame representation~\cite{ferrie2008frame}, i.e. one whose frame elements form a basis, as follows. One first associates to each system a basis $\{F_\lambda\}_{\lambda}$ for the real vector space of Hermitian operators, where
\begin{align}
\tr{F_\lambda} &= 1 \label{framesum2}.
\end{align}
Every basis has a unique dual basis, $\{D_\lambda\}_{\lambda}$, as proved in Appendix~\ref{prelim}, where 
\begin{align}
    \sum_\lambda D_\lambda &= \mathds{1}, &  \label{dualsum}
\tr{D_{\lambda'} F_\lambda}&=\delta_{\lambda\lambda'}.
\end{align}
In this representation, a completely-positive trace-preserving map~\cite{nielsen2001quantum,Schmidcausal} $\mathcal{E}$ is represented by a quasistochastic map defined by 
\begin{equation}
    \xi_{\mathcal{E}}(\lambda'|\lambda) = \tr{D_{\lambda'}\mathcal{E}(F_{\lambda})}.
\end{equation} 
As special cases, the representations of a state $\rho$ and an effect $E$ are given by
\begin{align}\label{eq:QuasiRepState}
    \xi_{\rho}(\lambda) &= \tr{D_{\lambda} \rho}, &
    \xi_{E}(\lambda) &= \tr{F_{\lambda} E},
\end{align}
and the quantum probabilities are recovered as
\begin{equation}
    \tr{E \mathcal{E}(\rho)}=\sum_{\lambda',\lambda} \xi_{E}(\lambda') \xi_{\mathcal{E}}(\lambda'|\lambda) \xi_{\rho}(\lambda).
\end{equation}

A quasiprobability representation is said to be {\em nonnegative} if for every process $\mathcal{E}$, $0 \leq \xi_{\mathcal{E}}(\lambda'|\lambda) \leq 1$ for every $\lambda, \lambda'$. In this case, the representation is in one-to-one correspondence with a noncontextual ontological model~\cite{Spe05,schmid2020unscrambling}.




\emph{Gross's representation---}
The particular quasiprobability representation introduced by Gross~\cite{gross2006hudson} is for odd dimensional quantum systems and takes the sample space to be a phase space $V = \Z_d \times \Z_d$, and so its elements will be labelled by $a := (p,q)$, rather than $\lambda$. 
Hence, the basis operators in Gross's representation are indexed by $a\in V$, and we will denote them by $A_a$.

The basis operators in Gross's representation can be written in terms of the Weyl operators as follows:  
\begin{equation}
    \{A_a\}_a := \left\{ \frac1d \sum_b \omega^{-[a,b]} {W^G_b}^\dagger \right\}_{a},
\end{equation}
where Gross's Weyl operators $W^G_{p,q}$ are related to ours via
 $   W^G_{p,q}:= \omega^{2^{-1}pq} W_{p,q}.$
These operators form an orthogonal basis, and so the basis is essentially self-dual, so that both $\{F_\lambda\}$ and $\{D_\lambda\}$ are proportional to $\{A_a\}$, with $D_\lambda = \frac1d F_\lambda$. They moreover satisfy a number of useful properties (see, e.g., Lemma~29 of Ref.~\cite{gross2006hudson}) including a key feature of {\em translational covariance}~\cite{gross2006hudson} where:
\begin{equation} \label{eq:CliffordCovariance}
    W_{p',q'} A_{p,q} W^{\dagger}_{p',q'} =  A_{p+p',q+q'} \quad \forall p,q,p',q'.
\end{equation}



\emph{Main result---} 
Our main result is a complete characterization of the (non)classicality of the stabilizer subtheory in every finite dimension.

\begin{theorem} \label{mainthm}
    \leavevmode
    \begin{enumerate}[label=(\alph*)]
        \item For any stabilizer subtheory (single- or multi-particle) in {\bf odd} dimensions, the {\em unique} nonnegative and diagram-preserving quasiprobability representation for it is Gross's representation.
        \item For any stabilizer subtheory (single- or multi-particle) in {\bf even} dimensions, there is no nonnegative and diagram-preserving quasiprobability representation.
    \end{enumerate}
\end{theorem}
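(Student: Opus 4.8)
The plan is to extract all of the needed structure from the two standing assumptions---nonnegativity and diagram preservation---rather than from any auxiliary symmetry postulate, and to run the odd and even cases in parallel until they diverge. First I would invoke the result quoted above that any diagram-preserving representation of the (full-dimensional) stabilizer subtheory is a minimal frame representation, furnishing a Hermitian basis $\{F_\lambda\}$ with $\tr{F_\lambda}=1$ and its unique dual $\{D_\lambda\}$. Diagram preservation under \emph{parallel} composition moreover forces the $n$-particle frame to factor as a tensor product of single-system frames; it therefore suffices to analyze a single qu$d$it, with nonexistence (respectively uniqueness) for one qu$d$it propagating to all multi-particle theories.

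The engine of the argument is that nonnegativity rigidifies the reversible (Clifford) part of the theory. Each Clifford unitary $U$ acts by conjugation and is represented by $\xi_U(\lambda'|\lambda)=\tr{D_{\lambda'}\,U F_\lambda U^{\dagger}}$, which is column-stochastic because $\sum_{\lambda'}D_{\lambda'}=\mathds{1}$ and $\tr{F_\lambda}=1$. Since the representation preserves composition, $\xi_{U^{-1}}=\xi_U^{-1}$; nonnegativity applied to both $U$ and $U^{-1}$ thus yields two nonnegative column-stochastic matrices that are mutual inverses. A short argument---a convex combination of probability vectors equals an extreme point of the simplex only if every contributing vector already is that extreme point---shows any such matrix is a permutation matrix. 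Hence the Clifford group acts on the sample space $\Lambda$ by permutations. Restricting to the Weyl operators, whose conjugation action depends only on $(p,q)\in\Z_d\times\Z_d$, gives a permutation action of $\Z_d^2$ on $\Lambda$, where $|\Lambda|=\dim\herm=d^2$.

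Next I would show this $\Z_d^2$-action is \emph{regular}. In the Weyl basis the conjugation action of $\Z_d^2$ on the (complexified) operator space is diagonal with characters $c\mapsto\omega^{[c,b]}$; since the symplectic form is nondegenerate over $\Z_d$ for every $d$, these exhaust $\widehat{\Z_d^2}$ with multiplicity one, so conjugation realizes the regular representation of $\Z_d^2$. A basis that is permuted by a representation realizes the permutation representation of the underlying $\Z_d^2$-set, and matching this to the multiplicity-free regular representation forces a single free orbit. We may therefore relabel $\lambda$ by $a\in\Z_d^2$ so that $W_c F_a W_c^{\dagger}=F_{a+c}$---precisely Gross's translational covariance of Eq.~\eqref{eq:CliffordCovariance}, now \emph{derived} from nonnegativity alone. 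Expanding $F_a=\sum_b g_a(b)W_b$, covariance gives $g_{a+c}(b)=\omega^{[c,b]}g_a(b)$, hence $F_a=\sum_b g_0(b)\,\omega^{[a,b]}W_b$ for a single weight function $g_0$; the basis property forces $g_0(b)\neq0$ for all $b$, and $\tr{F_a}=1$ fixes $g_0(0)=1/d$.

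It then remains to pin down $g_0$. Imposing Hermiticity of the $F_a$, the requirement that the \emph{full} Clifford group (not merely the Weyl subgroup) act by permutations---i.e.\ covariance under the symplectic transformations---and nonnegativity of the representations of all stabilizer states reduces the problem to solving for a phase function playing the role of Gross's factor $\omega^{2^{-1}pq}$. For odd $d$ this system admits the unique solution $|g_0(b)|=1/d$ with Gross's phase, identifying $\{F_a\}$ with $\{A_a\}$ and proving part (a). The delicate endpoint---and the step I expect to be the main obstacle---is eliminating the residual symplectic relabeling freedom so that the answer is genuinely \emph{the} Gross representation rather than merely one equivalent to it, which is exactly the gap that Gross's and Zhu's symmetry postulates papered over. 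For even $d$ the identical derivation applies up to this last step, but solving for the phase function requires inverting $2$ modulo $d$; since $2$ is not a unit in $\Z_d$, the combined constraints---covariance, Hermiticity, and nonnegativity---are inconsistent, so no nonnegative diagram-preserving frame exists. By the tensor-product reduction this excludes such a representation for every single- or multi-particle even-dimensional stabilizer subtheory, establishing part (b) and hence generalized contextuality in all even dimensions.
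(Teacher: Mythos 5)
Your overall architecture matches the paper's: invoke the structure theorem to get a minimal frame representation with $d^2$ ontic states, show (via stochastic matrices with stochastic inverses) that unitaries are represented by permutations as in Lemma~\ref{lemrevers}, reduce the multipartite case to the single-system case via parallel diagram preservation, and then pin down $F_{0,0}$ by expanding it in the Weyl basis. Your route to translational covariance is genuinely different and rather elegant: the paper labels ontic states by the outcome-deterministic values of $X^\dagger$ and $Z$ measurements (Lemma~\ref{ODSM}) and tracks eigenstates, whereas you match the Weyl conjugation action (which, by nondegeneracy of the symplectic form, is the regular representation of $\Z_d^2$) against the permutation representation on the frame, concluding the action on $\Lambda$ is free and transitive. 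That argument is correct and would be a legitimate substitute for that portion of the paper's proof. (Your worry about ``residual symplectic relabeling freedom'' is also a non-issue: once translational covariance holds in the derived labelling and $F_{0,0}$ is forced to be Gross's $A_0$, the set $\{F_a\}=\{W_a A_0 W_a^\dagger\}$ \emph{is} Gross's set of phase point operators.)

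However, the endgame has two genuine gaps. First, you never establish that the Weyl coefficients of $F_{0,0}$ are phases, i.e.\ $|g_0(b)|=1/d$; Hermiticity and Clifford covariance alone do not force this (they only relate $g_0(b)$ to $\overline{g_0(-b)}$ and make $|g_0|$ constant on symplectic orbits), so ``solving for a phase function'' is not yet set up. The paper gets this from outcome determinism (Lemma~\ref{ODSM}): measuring $W_b$ on the ontic state $(0,0)$ yields a single deterministic outcome, hence $\tr{F_{0,0}W_b}=\omega^{v_b}$ is one eigenvalue of $W_b$. Second, and fatally for part (b), your even-dimensional argument---``$2$ is not a unit in $\Z_d$, hence the constraints are inconsistent''---is invalid. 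In even $d$ one has $(Z^pX^q)^d\neq\mathds{1}$ in general, so the eigenvalues of Weyl operators are $2d$-th roots of unity and the exponents $v_{p,q}$ are \emph{half-integers} mod $d$; the equation $2v_{p,q}=pq$ of Eq.~\eqref{arbWcondition} then always has solutions, and non-invertibility of $2$ gives non-uniqueness, not inconsistency. In fact, for $d\equiv 0 \pmod 4$ the Hermiticity and Hadamard conditions (Eqs.~\eqref{arbWherm},~\eqref{arbWcovH}) are jointly satisfiable by half-integer solutions, so no contradiction can be reached from the constraints you list. The actual proof needs two separate arguments: for $d=2h$ with $h$ odd, evaluating Eq.~\eqref{arbWcovH} at $p=q=h$ makes the unknown cancel and forces $h^2\equiv 0\pmod d$, contradicting $h^2\equiv h$; and for $d=4r$ one needs a constraint you never derive---joint measurability of commuting Weyl operators, giving Eq.~\eqref{arbWsum} and hence Eq.~\eqref{arbWevenpq}---applied to $(0,2)$ and $(2r,2(r-1))$ to force $2r\equiv 0\pmod d$, a contradiction. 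Without the phase-determination step and these even-dimensional case analyses, parts (a) and (b) of Theorem~\ref{mainthm} remain unproven.
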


The proof is given in Appendix~\ref{thmproof}.

As shown in Ref.~\cite{spekkens2016quasi,catani2017spekkens}, Gross's representation is identical to Spekkens' epistemically restricted toy theory~\cite{spekkens2007evidence} for odd dimensions~\cite{spekkens2016quasi}.
Through the equivalences between various notions of classicality \cite{schmid2020structure}, our result can be stated in a number of ways. Perhaps the most natural equivalent statement of Theorem~\ref{mainthm} is the following:
For odd dimensions, the unique noncontextual representation of the stabilizer subtheory is Spekkens' epistemically restricted toy theory. For even dimensions, the stabilizer subtheory is contextual.


There are several senses in which our uniqueness result, Theorem~\ref{mainthm}(a), is stronger than that proven by Gross~\cite{gross2006hudson} or that proven by Zhu~\cite{zhu}. Most importantly, the principle of generalized noncontextuality is a well-established notion of classicality, while Gross's notion of Clifford covariance and Zhu's (weaker) notion of Clifford covariance are not. Additionally, our result starts from the very weak assumption of classical realism~\cite{schmid2020unscrambling}---that is, the ontological models framework---while Gross's and Zhu's results rely on additional assumptions which have not been given physical motivation. In particular, both Gross's and Zhu's arguments only single out Gross's representation if one assumes that one's representation is on a $d \times d$ phase space, and that it gives the correct marginal probabilities\footnote{ Zhu does not actually make the latter assumption, but as a consequence he does not uniquely single out Gross's representation; rather, he single out a family of representations in terms of the so-called Wigner-Wootters bases. But Gross's paper (the last paragraph in the proof of Theorem 23) shows explicitly that this assumption regarding marginal probabilities is exactly what is required to single out Gross's representation from among the Wigner-Wootters representations. }. In our approach, both of these are derived. Finally, our uniqueness result holds in all odd dimensions, while Gross's uniqueness result was proven only for odd prime dimensions, and Zhu's only for prime power dimensions.

Theorem~\ref{mainthm}(b) establishes that every stabilizer subtheory of even dimension exhibits contextuality. 
While this result has previously been claimed to be true, it had not in fact been proven (to our knowledge). For $d=2$, there are well-known proofs of contextuality, e.g. in Ref.~\cite{Lillystone2019}. It follows that every subtheory which contains all the processes in the qubit stabilizer subtheory is also contextual. 
However, it is not known whether every even-dimensional stabilizer subtheory contains the qubit stabilizer as a subtheory (see Ref.~\cite{gross2006hudson}),
and so the claim of Theorem~\ref{mainthm}(b) does not trivially follow in this manner.




\emph{Generalized contextuality as a resource for quantum computation---}
The stabilizer subtheory is efficiently simulable~\cite{gottesman1998heisenberg}. However, if one supplements it with appropriate nonstabilizer states, one can achieve universal quantum computation through magic state distillation~\cite{bravyi2005universal}. 

Any state which promotes the stabilizer subtheory to universal quantum computation must have negativity in its Gross's representation~\cite{veitch2012negative}.
Ref.~\cite{howard2014contextuality} further showed that Kochen-Specker contextuality is necessary for universality in this model of quantum computation.

The key argument of Ref.~\cite{howard2014contextuality} was a graph-theoretic proof that if a state is negative in Gross's representation, then it admits a (state-dependent) proof of Kochen-Specker contextuality using only stabilizer measurements. Our main theorem, Theorem~\ref{mainthm}, is analogous, establishing that if a state is negative in Gross's representation, then it admits a proof of {\em generalized} contextuality.

Hence, we immediately arrive at a result akin to that of Ref.~\cite{howard2014contextuality}:  generalized contextuality is necessary for universality in the state injection model of quantum computation.
\begin{theorem} \label{compthm}
    Consider any state $\rho$ which promotes the stabilizer subtheory to universal quantum computation. There is no generalized noncontextual model for the stabilizer subtheory together with $\rho$.
\end{theorem}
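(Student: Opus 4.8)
The plan is to prove the statement by contradiction, reducing it to the uniqueness result of Theorem~\ref{mainthm} together with the established fact~\cite{veitch2012negative} that any $\rho$ promoting the stabilizer subtheory to universal quantum computation must have negativity in Gross's representation. The guiding observation is that, by the correspondence of Refs.~\cite{negativity,schmid2019characterization,schmid2020structure}, a generalized noncontextual model of the combined theory---the stabilizer subtheory supplemented with $\rho$---is nothing but a nonnegative and diagram-preserving quasiprobability representation of that combined theory. Any such representation restricts to one of the bare stabilizer subtheory, to which Theorem~\ref{mainthm} directly applies.

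Concretely, I would suppose for contradiction that a generalized noncontextual model of the stabilizer subtheory together with $\rho$ exists, and pass to the equivalent quasiprobability picture to obtain a nonnegative and diagram-preserving representation $\xi$ of the enlarged theory. Since adjoining the single state $\rho$ introduces no new systems, $\xi$ is specified by the same per-system frames $\{F_\lambda\}$ and dual frames $\{D_\lambda\}$ that it assigns to the pure stabilizer subtheory; forgetting $\rho$, its restriction to the stabilizer processes is again a nonnegative and diagram-preserving representation of the stabilizer subtheory alone. I would then split on the parity of the dimension. In \textbf{odd} dimensions the stabilizer subtheory is full-dimensional, so the minimal frame form of Ref.~\cite{schmid2020structure} applies and Theorem~\ref{mainthm}(a) forces the restriction to coincide with Gross's representation; in particular, identifying the sample space with the phase space $V$, the dual frame is $D_a = \tfrac1d A_a$. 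But nonnegativity of $\xi$ on the adjoined state requires $\xi_\rho(a) = \tr{D_a \rho} = \tfrac1d \tr{A_a \rho} \geq 0$ for every $a$, i.e.\ $\rho$ is nonnegatively represented in Gross's representation---contradicting~\cite{veitch2012negative}. In \textbf{even} dimensions the conclusion is immediate and in fact stronger: Theorem~\ref{mainthm}(b) already excludes any nonnegative and diagram-preserving representation of the bare stabilizer subtheory, hence none can exist for the enlarged theory either. In both cases no nonnegative representation of the combined theory exists, so by the correspondence no generalized noncontextual model does.

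The arithmetic involved here is negligible; the real content has already been supplied by Theorem~\ref{mainthm}. The one point demanding care---and which I regard as the main obstacle---is the justification of the restriction step: one must verify that a noncontextual model of the joint theory genuinely induces a nonnegative, diagram-preserving representation of the bare stabilizer subtheory on the same sample space, so that the uniqueness result is applicable. This rests on three facts: that $\rho$ lives on an existing stabilizer system and so introduces no new system type, that both diagram preservation and nonnegativity are inherited by any subcollection of the represented processes, and that the stabilizer subtheory is full-dimensional so that Ref.~\cite{schmid2020structure} guarantees the representation takes minimal frame form. Once these are in place, the contradiction follows and the theorem is established.
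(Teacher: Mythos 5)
Your proposal is correct and follows essentially the same route as the paper: the paper likewise derives Theorem~\ref{compthm} by combining the negativity result of Ref.~\cite{veitch2012negative} with Theorem~\ref{mainthm}, via the equivalence between generalized noncontextual models and nonnegative diagram-preserving quasiprobability representations, so that any such model restricted to the bare stabilizer subtheory would have to be Gross's representation, contradicting the negativity of $\rho$. Your explicit treatment of the restriction step and the even-dimensional case simply makes precise what the paper leaves implicit.
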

We comment in Appendix~\ref{alternativeproofs} on two other routes to proving this theorem.


\emph{On the sufficiency of generalized contextuality for universal quantum computation---}
Thus far we have focused on the necessity of contextuality for quantum computation. However, the fact that Gross's representation provides the unique noncontextual representation of the stabilizer subtheory may also be useful for discovering in what sense (if any) generalized contextuality is {\em sufficient} for quantum computation.

Without any caveats, generalized contextuality is clearly not sufficient for universal quantum computation. This can be seen by the example of the stabilizer subtheory in dimension $2$, which admits proofs of contextuality~\cite{Lillystone2019} and yet is efficiently simulable~\cite{gottesman1998heisenberg}.

Still, it is conceivable that there is a more nuanced sufficiency result relating contextuality and computation, e.g. by leveraging quantitative measures of generalized contextuality~\cite{marvian2020inaccessible,selby2022open} or by focusing on particular dimensions and models of quantum computation.  We now prove a related result (without explicit reliance on Theorem~\ref{mainthm}).

From Ref.~\cite{anwar2012qutrit,veitch2012negative}, we know that access to enough copies of any nonstabilizer pure state promotes the stabilizer subtheory to universal quantum computation. 
Similarly, access to enough copies of any nonstabilizer unitary promotes the stabilizer subtheory to universal quantum computation, since the Clifford unitaries together with any other unitary gate forms a universal gate set~\cite{anygatemakesStabUniversal,campbell2012magic}.

It is well known that every pure nonstabilizer state is negatively represented in Gross's representation~\cite{gross2006hudson}. Additionally, it is not hard to see that every nonstabilizer unitary gate is negatively represented in Gross's representation. By the universal gate set property~\cite{anygatemakesStabUniversal,campbell2012magic}, combining the positively represented Clifford gates with any given nonstabilizer unitary allows the approximation of any other unitary---including one that maps some pure stabilizer state to some pure nonstabilizer state. Since the stabilizer state is represented positively and the nonstabilizer state must be represented negatively in Gross's representation, the unitary mapping between them must have negativity in its Gross's representation, and hence so must the given nonstabilizer unitary used to construct it. Hence we obtain the following theorem: 
\begin{theorem}
A (necessary and) sufficient condition for any unitary or pure state to promote the stabilizer subtheory to universal quantum computation is that it be negatively represented in Gross's representation.
\end{theorem}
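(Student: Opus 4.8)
The plan is to prove the theorem by establishing two separate equivalences and then chaining them together: first, that for both pure states and unitaries, negative representation in Gross's representation is equivalent to being nonstabilizer; and second, that being nonstabilizer is equivalent to promoting the stabilizer subtheory to universal quantum computation (given access to enough copies or uses). Together these give negativity $\iff$ nonstabilizer $\iff$ universality, which is exactly the claim. Note that both equivalences hold separately for states and for unitaries, so the two cases of the theorem can be handled in parallel.

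For the first equivalence, the pure-state case is the discrete Hudson theorem of Ref.~\cite{gross2006hudson}: a pure state is nonnegatively represented in Gross's representation if and only if it is a stabilizer state, so negativity is equivalent to being nonstabilizer. For the unitary case, one direction is immediate from translational covariance, Eq.~\eqref{eq:CliffordCovariance}, together with the general Clifford covariance of Gross's representation, which makes every Clifford unitary a permutation of the phase-space points and hence nonnegatively represented. For the converse---that every nonstabilizer unitary is negatively represented---I would argue by contradiction. Suppose some nonstabilizer unitary $U$ were nonnegatively represented. Since nonnegative quasistochastic maps compose to nonnegative maps (products of stochastic matrices are stochastic) and Gross's representation is diagram-preserving, any circuit built from $U$ and Clifford gates would then be nonnegatively represented. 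But $U$ together with the Cliffords forms a universal gate set~\cite{anygatemakesStabUniversal,campbell2012magic}, so such circuits can approximate a unitary $V$ mapping a pure stabilizer state $\ket{s}$ to a pure nonstabilizer state $\ket{t}$. Applying the nonnegative representation of $V$ to the nonnegative representation of $\ket{s}$ would yield a nonnegative representation of $\ket{t}$, contradicting the Hudson theorem; hence $U$ must be negatively represented.

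For the second equivalence, sufficiency (nonstabilizer $\Rightarrow$ universal) is supplied by the cited results: enough copies of any nonstabilizer pure state yield universality via magic state distillation~\cite{bravyi2005universal,anwar2012qutrit,veitch2012negative}, and any nonstabilizer (hence non-Clifford) unitary together with the Cliffords forms a universal gate set~\cite{anygatemakesStabUniversal,campbell2012magic}. Necessity (universal $\Rightarrow$ nonstabilizer) follows from Gottesman--Knill: the stabilizer subtheory is efficiently classically simulable~\cite{gottesman1998heisenberg}, and supplementing it with a stabilizer state or Clifford unitary leaves one inside the stabilizer subtheory, which remains efficiently simulable and therefore cannot be universal. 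Thus any resource that promotes it to universality must be nonstabilizer.

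The main obstacle is the converse in the first equivalence---showing that every nonstabilizer unitary is negatively represented. The subtlety is that the universal gate set only approximates the target unitary $V$ to arbitrary precision rather than realizing it exactly, so the contradiction argument must be made robust: I would phrase it using the fact that the set of nonnegatively (stochastically) represented maps is both closed under composition and topologically closed. Since Gross's representation is a continuous linear map from processes to quasistochastic matrices, a sequence of nonnegative circuit approximations $C_n \to V$ forces $\xi_{C_n} \to \xi_V$ with $\xi_V$ still stochastic, hence $\ket{t}=V\ket{s}$ nonnegatively represented, which is the desired contradiction. The remaining steps are essentially bookkeeping over the already-cited structural results.
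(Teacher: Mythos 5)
Your proposal is correct and takes essentially the same route as the paper: it identifies negativity with being nonstabilizer (the discrete Hudson theorem for pure states, and for unitaries the contradiction argument that a nonnegatively represented non-Clifford gate plus the positively represented Cliffords would yield a nonnegatively represented map from a stabilizer state to a nonstabilizer state), and then identifies nonstabilizer resources with universality via the cited distillation and universal-gate-set results. Your explicit closure-and-continuity argument even patches the approximation subtlety (the gate set only approximates the target unitary) that the paper's own proof leaves implicit.
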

For the case of pure states, this result was pointed out in Refs.~\cite{anwar2012qutrit,veitch2012negative}.  Perhaps the most important open question that remains is whether an analogous sufficiency result holds for mixed states and generic quantum channels.

\emph{Conclusion---} We have proved that noncontextuality picks out a unique classical explanation for every stabilizer subtheory in odd dimensions, and that there is no noncontextual model for any stabilizer subtheory in even dimensions. We then proved that, as a consequence, generalized contextuality is a necessary resource for universal quantum computation in the state injection model. We expect these results connecting contextuality and negativity to continue to be useful for understanding the resources needed for quantum information processing.

\subsection{Acknowledgements}
D.S. thanks Robert W. Spekkens and Stephen Bartlett for early discussions conjecturing the uniqueness of Gross's representation, and thanks Philippe Allard Gu\'erin and Lorenzo Catani for useful discussions.
D.S. was supported by a Vanier Canada Graduate Scholarship. MFP is supported by the Royal Commission for the Exhibition of 1851. D.S. and J.H.S. are supported by the Foundation for Polish
Science through IRAP project co-financed by EU within Smart Growth Operational Programme
(contract no. 2018/MAB/5). This research was supported by Perimeter Institute for Theoretical Physics. Research at Perimeter Institute is supported in part by the Government of Canada through the Department of Innovation, Science and Economic Development Canada and by the Province of Ontario through the Ministry of Colleges and Universities. 

\bibliographystyle{apsrev4-2}
\bibliography{bibliography}

\appendix

\section{The stabilizer subtheory} \label{app:stab}

We here expand on the exposition of the stabilizer subtheory from the main text (with some redundancy for completeness).

The stabilizer subtheory is built around the Clifford group, whose elements will be referred to as Clifford unitaries.
To define these, we first introduce the \emph{Weyl operators} (also called generalized Pauli operators).
Consider a $d$-dimensional quantum system, and define the computational basis $\{ \ket{0}, \ldots, \ket{d-1} \}$ in its Hilbert space $\mathcal{H}$. Each basis element is labelled by an element of $\Z_d$\footnote{When $d$ is prime, $\Z_d$ has the structure of a finite algebraic field. For non-prime $d$, things are somewhat more complicated \cite{gross2006hudson}, but the results in this work still hold.}, which we refer to as the configuration space. Writing $\omega = \exp(\frac{2\pi i}{d})$, we define the translation operator $X$ and boost operator $Z$ via 
\begin{align}
    X\ket{x} &= \ket{x+1}\\
    Z\ket{x} &= \omega^x\ket{x}.
\end{align}
Note that here and throughout, all arithmetic is within $\Z_d$.
These can be viewed as discrete position and momentum translation operators, respectively, for a particle on a ring.
From these, the single-system Weyl operators are defined as
\begin{equation}
   W_{p,q} =  Z^p X^q,
\end{equation}
where $p,q\in \mathbb{Z}_d$. Note that these are often defined with an additional phase factor $\omega^{\gamma_{p,q}}$; however, 
the resulting operational theory is the same for any valid~\cite{Delfosse2017} phase choice, so we will set $\gamma_{p,q}$ to zero. (We  highlight this lack of dependence on the chosen phase by introducing the stabilizer subtheory using superoperators, for which any valid choice of phase cancels.) 

The Weyl operators are unitaries whose associated superoperators, $\mathcal{W}_{p,q}(\cdot):= W_{p,q}(\cdot)W_{p,q}^\dagger$, form a group with composition law 
\begin{align}\label{eq:weyl-comp}
    \mathcal{W}_{p,q} \mathcal{W}_{p',q'} &=  \mathcal{W}_{p+p',q+q'},
\end{align}
and inverse
\begin{equation} \label{eq:WeylInverse}
    \mathcal{W}_{p,q}^{-1} = \mathcal{W}_{p,q}^\dagger = \mathcal{W}_{-p,-q}.
\end{equation}
(Note that the Weyl operators themselves do not form a group as the above equations only hold up to a particular phase factor.)

It will be useful later to note that the Weyl operators are orthonormal with respect to a rescaled Hilbert-Schmidt inner product:
  \begin{equation} \label{WeylON}
        \frac{1}{d} \, \tr{W_{p,q} W_{p',q'}^{\dagger}} = \delta_{p,p'} \delta_{q,q'}.
    \end{equation}
    
The Clifford unitaries are defined as unitaries which---up to a phase---map Weyl operators to other Weyl operators under conjugation. Equivalently, their associated superoperators map Weyl superoperators to other Weyl superoperators under conjugation. That is, $\mathcal{U}$ is a Clifford unitary superoperator if for every $(p,q)$, one has
\begin{equation}
    \mathcal{U} \mathcal{W}_{p,q} \mathcal{U}^\dagger =  \mathcal{W}_{p',q'}.
\end{equation}

For a fixed dimension, the Clifford superoperators form a group generated by
the superoperators associated to the generalized phase gate $P$~\cite{farinholt2014ideal}
and the generalized Hadamard gate $H$, whose explicit form
\begin{equation} \label{Hadamarddefn}
    H \ket{x} = \frac{1}{\sqrt{d}}\sum_{k \in \Z_d} \omega^{xk}\ket{k}
\end{equation}
will be used below.

The stabilizer subtheory for a single system in dimension $d$ is defined as the set of processes which can be generated by sequential composition of: i) pure states uniquely identified by being the simultaneous eigenstates of a given set of Weyl operators, ii) projective measurements in the spectral decomposition of the Weyl operators\footnote{Note that although the Weyl operators are not Hermitian operators, they {\em are} normal operators, and hence have a spectral decomposition, which implies one can carry out a projective measurement in the eigenbasis of each.}, and iii) Clifford unitary superoperators on the associated Hilbert space, as well as convex mixtures of such processes.

This construction is easily generalized to allow for parallel composition, that is, for systems made up of $n$ qu$d$its\footnote{To the authors' knowledge, parallel composition of systems of different dimensions is at best highly nontrivial, and has not been considered in the literature.}, by defining the multiparticle Weyl operators as tensor products of those defined above, and defining the multiparticle Clifford operators as unitary superoperators that preserve the multiparticle Weyl operators under conjugation;  see Ref.~\cite{gross2006hudson} for more details. 
An important feature is that in general the stabilizer subtheory defined by parallel composition of $n$ qu$d$its is not the same as the stabilizer subtheory defined by a single $d^n$ dimensional system---for instance, the latter generally has far fewer states~\cite{gross2006hudson}. Therefore, for a given dimension $D$ there may be multiple different stabilizer theories which could be associated to it, depending on whether one views it as a single monolithic system of dimension $D$ (which Gross calls the single-particle view), or views it as some tensor product of multiple qu$d$its (which Gross calls a multi-particle view).

\section{Useful Preliminaries} \label{prelim}

It is well-known that a basis of a vector space uniquely defines a dual basis in the dual vector space (i.e. the space of functionals on the vector space). We will leverage this fact, but in a slightly different form: 
\begin{lemma}\label{lemmaunique}
Given any basis $\{F_\lambda\}_\lambda$ for a $d^2$-dimensional real vector space $\herm$ of Hermitian operators on a Hilbert space $\mathcal{H}$, there is a unique set $\{D_\lambda\}_\lambda$ of $d^2$ Hermitian operators satisfying
\beq\label{eq:dualFrame1}
\mathsf{tr}(D_{\lambda'} F_\lambda)=\delta_{\lambda,\lambda'},
\eeq
and $\{D_\lambda\}_\lambda$ also forms a basis for $\herm$.
\end{lemma}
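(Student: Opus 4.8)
The plan is to recognize $\tr{AB}$, restricted to the real vector space $\herm$, as a genuine inner product, and then to re-derive the standard fact that a basis of a finite-dimensional inner product space admits a unique biorthogonal dual basis. Concretely, I would set $\langle A, B\rangle := \tr{AB}$ and first verify this is a real, symmetric, positive-definite bilinear form on $\herm$. Reality follows since for Hermitian $A,B$ one has $\overline{\tr{AB}} = \tr{(AB)^\dagger} = \tr{BA} = \tr{AB}$; symmetry is immediate from cyclicity of the trace; and positive-definiteness follows from $\langle A, A\rangle = \tr{A^2} = \tr{A^\dagger A} = \Norm{A}^2 \geq 0$, with equality only when $A = 0$. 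In particular the form is nondegenerate on $\herm$.

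Next I would consider the linear map $\Phi: \herm \to \R^{d^2}$ defined by $\Phi(D) = (\tr{D F_\lambda})_\lambda$, and show it is injective: if $\tr{D F_\lambda} = 0$ for all $\lambda$, then since $\{F_\lambda\}_\lambda$ spans $\herm$ we get $\langle D, B\rangle = 0$ for all $B \in \herm$, whence $\langle D, D\rangle = 0$ and so $D = 0$ by positive-definiteness. Since $\dim \herm = d^2 = \dim \R^{d^2}$, the map $\Phi$ is a bijection, and existence and uniqueness of the $D_\lambda$ follow simultaneously: for each $\lambda'$, one defines $D_{\lambda'}$ to be the unique preimage under $\Phi$ of the standard basis vector $e_{\lambda'}$, which is exactly the condition $\tr{D_{\lambda'} F_\lambda} = \delta_{\lambda\lambda'}$. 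Alternatively, one can give an explicit construction via the Gram matrix $G_{\lambda\mu} = \tr{F_\lambda F_\mu}$, which is real, symmetric, and invertible by nondegeneracy: setting $D_{\lambda'} = \sum_\mu (G^{-1})_{\lambda'\mu} F_\mu$ makes the Hermiticity of each $D_\lambda$ manifest and verifies the biorthogonality relation directly.

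Finally, to see that $\{D_\lambda\}_\lambda$ is itself a basis of $\herm$, I would note that biorthogonality forces linear independence: if $\sum_\lambda a_\lambda D_\lambda = 0$, then applying $\tr{(\cdot) F_{\lambda'}}$ gives $a_{\lambda'} = 0$ for every $\lambda'$. As these are $d^2$ linearly independent vectors in a $d^2$-dimensional space, they form a basis.

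The only genuine subtlety — and the place I would be most careful — is confirming that the trace form is a \emph{real, positive-definite} inner product on $\herm$, since this is what guarantees both that the dual elements can be taken to be Hermitian (lying in $\herm$ rather than the larger complex operator space) and that they are unique within $\herm$. Once that point is secured, the remainder is the standard existence-and-uniqueness argument for a dual basis, with no further obstacle.
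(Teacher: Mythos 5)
Your proof is correct, but it follows a genuinely different route from the paper's. The paper reduces the lemma to two standard facts: any basis $\{F_\lambda\}_\lambda$ of $\herm$ determines a unique dual basis of linear functionals $\mathcal{D}_{\lambda'}$ in the dual space $\herm^*$ satisfying $\mathcal{D}_{\lambda'}(F_\lambda)=\delta_{\lambda,\lambda'}$, and the Riesz representation theorem then converts each functional into a unique Hermitian operator $D_{\lambda'}$ via $\mathcal{D}_{\lambda'}(\cdot)=\tr{(\cdot)D_{\lambda'}}$. You never pass through $\herm^*$: instead you check directly that $(A,B)\mapsto\tr{AB}$ is a real, symmetric, positive-definite form on $\herm$, and then prove that $\Phi(D)=(\tr{D F_\lambda})_\lambda$ is injective, hence bijective by dimension counting---which amounts to re-proving the finite-dimensional Riesz theorem in this setting rather than citing it. Your version is more self-contained and more explicit: the Gram-matrix formula $D_{\lambda'}=\sum_\mu (G^{-1})_{\lambda'\mu}F_\mu$, with $G_{\lambda\mu}=\tr{F_\lambda F_\mu}$ real and symmetric, makes the Hermiticity of the duals manifest (this is exactly the content hidden in the paper's phrase ``with a unique Hermitian operator''), and you explicitly verify that $\{D_\lambda\}_\lambda$ is a basis via biorthogonality, a point the paper leaves implicit in the fact that the Riesz correspondence is an isomorphism. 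The paper's version buys brevity and a clean separation of the purely linear-algebraic step (existence and uniqueness of the dual basis of functionals) from the step that uses the inner-product structure (representing functionals by operators). You have also correctly located the one genuine subtlety---that the trace form restricted to Hermitian operators is a \emph{real} positive-definite inner product, which is what guarantees the duals lie in $\herm$ and are unique there; that is precisely the work being done by the paper's appeal to Riesz.
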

 \begin{proof} 
Consider any basis $\{ F_\lambda\}_\lambda$ of $\herm$. It uniquely specifies a basis $\{\mathcal{D}_\lambda\}_\lambda$ of the dual vector space $\herm^*$, where $\{\mathcal{D}_\lambda\}_\lambda$ are linear functionals satisfying $\mathcal{D}_{\lambda'}(F_\lambda)=\delta_{\lambda,\lambda'}$.\footnote{To see that this is unique, consider a linear functional $\mathcal{D}'_{\lambda'}$ satisfying $\mathcal{D}'_{\lambda'}(F_\lambda) = \delta_{\lambda,\lambda'}$ for all $\lambda$. Since a linear functional is fully specified by its action on a basis, $\mathcal{D}'_{\lambda'}$ is the exact same functional as $\mathcal{D}_{\lambda'}$.} Now, in order to obtain again a set of Hermitian operators $\{D_\lambda\}_\lambda$, we use the Riesz representation theorem~\cite{riesz1914demonstration}, which states that each of these functionals $\mathcal{D}_\lambda$ can be written as the Hilbert-Schmidt inner product with a unique Hermitian operator $D_\lambda$, namely
\beq
\mathcal{D}_{\lambda}(\cdot)= \tr{(\cdot) D_{\lambda}}.
\eeq
This picks out a unique basis $\{D_{\lambda}\}_\lambda$ which satisfies Eq.~\eqref{eq:dualFrame1}. 
\end{proof}

Note that the operators $\{F_\lambda\}_\lambda$ and $\{D_\lambda\}_\lambda$ are both in $\herm$. 
For a basis $\{F_\lambda\}_\lambda$, we refer to the set $\{D_\lambda\}_\lambda$ constructed using this lemma as the {\em dual basis}.

 Recall that a quasiprobability representation is \emph{diagram-preserving}~\cite{schmid2020structure,schmid2020unscrambling} if it commutes with sequential and parallel composition of processes. 
Another useful lemma we will require is the following.
 \begin{lemma} \label{lemrevers}
A nonnegative and diagram-preserving quasiprobabilistic representation of any unitary superoperator $\mathcal{U}(\cdot):=U(\cdot)U^\dagger$ is given by a permutation; that is, by a conditional probability distribution
\beq
\xi_\mathcal{U}(\lambda'|\lambda) = \delta_{\sigma_U(\lambda'),\lambda}
\eeq
for some permutation $\sigma_U: \Lambda \to \Lambda$.
\end{lemma}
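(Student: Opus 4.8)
The plan is to show that the quasistochastic matrix representing $\mathcal{U}$ is an invertible column-stochastic matrix whose inverse is also column-stochastic, and then to invoke the purely linear-algebraic fact that any such matrix is a permutation matrix. First I would record three elementary properties of the matrix $A_{\lambda'\lambda} := \xi_{\mathcal{U}}(\lambda'|\lambda)$. (i) By the nonnegativity assumption, $A \geq 0$ entrywise. (ii) Since $\mathcal{U}$ is trace-preserving and $\sum_{\lambda'} D_{\lambda'} = \mathds{1}$ by Eq.~\eqref{dualsum}, each column sums to one: $\sum_{\lambda'}\xi_{\mathcal{U}}(\lambda'|\lambda) = \tr{\mathcal{U}(F_\lambda)} = \tr{F_\lambda} = 1$, using $\tr{F_\lambda}=1$ from Eq.~\eqref{framesum2}. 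Hence $A$ is column-stochastic. The same reasoning applies to $B_{\lambda'\lambda} := \xi_{\mathcal{U}^{-1}}(\lambda'|\lambda)$, the representation of the inverse superoperator $\mathcal{U}^{-1}=\mathcal{U}^\dagger$, which is again a trace-preserving unitary superoperator. (iii) Diagram preservation means the representation commutes with sequential composition, so the matrix products $AB$ and $BA$ represent $\mathcal{U}\mathcal{U}^{-1}$ and $\mathcal{U}^{-1}\mathcal{U}$, both of which equal the identity superoperator $\mathcal{I}$. The identity is represented by the identity matrix, since $\xi_{\mathcal{I}}(\lambda'|\lambda) = \tr{D_{\lambda'}F_\lambda} = \delta_{\lambda',\lambda}$ by the defining property of the dual basis (Lemma~\ref{lemmaunique}). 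Thus $AB=BA=\delta_{\lambda'\lambda}$, so $A$ is invertible with $A^{-1}=B$ also column-stochastic.

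The crux is then the claim that a column-stochastic matrix whose inverse is column-stochastic must be a permutation matrix, which I would prove via the extreme points of the probability simplex $\Delta\subset\R^{|\Lambda|}$. A column-stochastic matrix maps $\Delta$ into $\Delta$, so both $A$ and $B$ restrict to affine self-maps of $\Delta$; since $AB=BA$ is the identity, these restrictions are mutually inverse, so $A|_\Delta$ is an affine bijection of $\Delta$. An affine bijection sends extreme points to extreme points: if a vertex $e_\lambda$ satisfied $A e_\lambda = \tfrac12(u+v)$ with $u\neq v$ in $\Delta$, then applying $B=A^{-1}$ would give $e_\lambda = \tfrac12(Bu+Bv)$ with $Bu\neq Bv$ in $\Delta$, contradicting the extremality of $e_\lambda$. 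Hence each column $A e_\lambda$ is some vertex $e_{\pi(\lambda)}$, and bijectivity forces $\pi$ to be a permutation. Reading off the entries gives $\xi_{\mathcal{U}}(\lambda'|\lambda)=\delta_{\lambda',\pi(\lambda)}$, which is exactly the claimed form with $\sigma_U:=\pi^{-1}$.

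I expect the only genuine subtlety --- the main obstacle --- to be this last step that an invertible stochastic matrix with stochastic inverse is a permutation; everything preceding it is bookkeeping with the frame representation. The extreme-point argument above settles it cleanly, but one could equivalently appeal to the standard fact that a nonnegative matrix with a nonnegative inverse is monomial (a permutation times a positive diagonal), after which column-stochasticity forces the diagonal entries to be $1$. I would also double-check the compositional convention, namely that $\xi_{\mathcal{E}\circ\mathcal{F}}=\xi_{\mathcal{E}}\,\xi_{\mathcal{F}}$ as matrix multiplication, so that $AB$ indeed represents $\mathcal{U}\mathcal{U}^{-1}$ and the inversion argument applies in the correct order.
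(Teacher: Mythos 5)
Your proof is correct and follows essentially the same route as the paper's: use diagram preservation and the identity $\mathcal{U}\mathcal{U}^\dagger = \mathcal{U}^\dagger\mathcal{U} = \mathbb{1}$ to show that $\xi_{\mathcal{U}}$ and $\xi_{\mathcal{U}^\dagger}$ are mutually inverse stochastic matrices, then conclude via the fact that a stochastic matrix with a stochastic inverse is a permutation. The only difference is that the paper cites this last linear-algebraic fact without proof, whereas you supply a correct extreme-point argument for it (and also spell out why the columns sum to one), so your write-up is, if anything, more complete.
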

 \begin{proof} 
By definition, a nonnegative quasiprobabilistic representation $\xi$ represents every unitary superoperator $\mathcal{U}$ as a stochastic map from $\Lambda$ to itself, so $\xi_{\mathcal{U}}$ and $\xi_{\mathcal{U}^\dagger}$ are stochastic maps. By diagram preservation, it holds that $\xi_{\mathcal{U} \mathcal{U}^\dagger} = \xi_{\mathcal{U}} \circ \xi_{\mathcal{U}^\dagger}$. But $\mathcal{U}\mathcal{U}^\dagger = \mathbb{1}$, and hence $\xi_{\mathcal{U}\mathcal{U}^\dagger} = \xi_{\mathbb{1}}$, where (by diagram preservation) $\xi_{\mathbb{1}}$ must be the identity matrix. Therefore  $\xi_{\mathcal{U}^\dagger} \circ \xi_{\mathcal{U}}$ is the identity matrix, so $\xi_{\mathcal{U}^\dagger}$ is the left inverse of $\xi_\mathcal{U}$, and so (by the fact that they are square matrices) $\xi_\mathcal{U}$ and $\xi_{\mathcal{U}^\dagger}$ are inverses. But the only (square) stochastic matrices whose inverses are stochastic are permutations. Hence $\xi_\mathcal{U}$ is a permutation for every unitary $U$.
 \end{proof}

A final useful lemma is a well-known result from Ref.~\cite{Spe05}:
\begin{lemma}\label{ODSM}
Projective measurements have an outcome-deterministic representation in any noncontextual ontological model. That is, representation of the projectors in a projective measurement are conditional probability distributions valued in $\{0,1\}$. Furthermore, every ontic state is in the support of the representations of one and only one of eigenstates in any given projective measurement.
\end{lemma}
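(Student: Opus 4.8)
The plan is to derive outcome determinism from the conjunction of preparation and measurement noncontextuality, following the standard argument of Ref.~\cite{Spe05}. First I would fix a projective measurement $\{\Pi_k\}_k$ with $\Pi_k\Pi_l=\delta_{kl}\Pi_k$ and $\sum_k\Pi_k=\mathbb{1}$, and write $\xi_{\Pi_k}(\lambda)\in[0,1]$ for the response function representing outcome $k$, and $\mu_\rho(\lambda)$ for the probability distribution representing a state $\rho$. Measurement noncontextuality ensures that each $\xi_{\Pi_k}$ depends only on the effect $\Pi_k$ (not on the measurement context in which it appears) and that the effect $\mathbb{1}$ is represented by the constant function $1$; together these yield the completeness relation $\sum_k\xi_{\Pi_k}(\lambda)=1$ for every $\lambda$. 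The goal is then to show that for each $\lambda$ exactly one term in this sum equals $1$ while the rest vanish.

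The core step is a two-sided squeezing argument applied to a single projector $\Pi$. I would choose a state $\rho_+$ supported on the range of $\Pi$ and a state $\rho_-$ supported on its kernel, so that $\tr{\Pi\rho_+}=1$ and $\tr{\Pi\rho_-}=0$. Since these expectation values are reproduced as $\sum_\lambda\xi_\Pi(\lambda)\mu_{\rho_\pm}(\lambda)$ with $0\le\xi_\Pi\le 1$ and $\mu_{\rho_\pm}\ge 0$, nonnegativity forces $\xi_\Pi(\lambda)=1$ on $\mathrm{supp}(\mu_{\rho_+})$ and $\xi_\Pi(\lambda)=0$ on $\mathrm{supp}(\mu_{\rho_-})$. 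Taking $\rho_+,\rho_-$ to be the maximally mixed states on the two subspaces and invoking convex-linearity of the representation, I would write the distribution of the global maximally mixed state $\mathbb{1}/d$ as a convex combination of $\mu_{\rho_+}$ and $\mu_{\rho_-}$, so that $\xi_\Pi\in\{0,1\}$ everywhere on $\mathrm{supp}(\mu_{\mathbb{1}/d})$.

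The main obstacle is showing that this support is in fact all of $\Lambda$, so that no ontic states are left unconstrained, and this is precisely where preparation noncontextuality does the essential work. Since every density operator satisfies $\rho\le\mathbb{1}$, one can write $\mathbb{1}/d=\tfrac1d\rho+(1-\tfrac1d)\sigma$ with $\sigma=(\mathbb{1}-\rho)/(d-1)$ a valid state; preparation noncontextuality forces $\mu_{\mathbb{1}/d}$ to be the same distribution however $\mathbb{1}/d$ is decomposed, so that $\mathrm{supp}(\mu_\rho)\subseteq\mathrm{supp}(\mu_{\mathbb{1}/d})$ for every $\rho$. After restricting $\Lambda$ to $\mathrm{supp}(\mu_{\mathbb{1}/d})$ (ontic states outside it carry zero probability under every preparation and may be discarded without loss of generality), the squeezing argument yields $\xi_{\Pi_k}(\lambda)\in\{0,1\}$ for all $\lambda$ and all $k$, which together with completeness establishes outcome determinism.

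Finally, for the ``furthermore'' clause I would define $\Lambda_k:=\{\lambda:\xi_{\Pi_k}(\lambda)=1\}$; by the preceding paragraph these sets partition $\Lambda$. For the measurement read in its eigenbasis, the eigenstate $\rho_k$ satisfies $\tr{\Pi_k\rho_k}=1$ and $\tr{\Pi_j\rho_k}=0$ for $j\ne k$, so $\mathrm{supp}(\mu_{\rho_k})\subseteq\Lambda_k$, giving disjointness of the eigenstate supports; and since $\tfrac1d\sum_k\mu_{\rho_k}=\mu_{\mathbb{1}/d}$ has full support, these supports also cover $\Lambda$. Hence every ontic state lies in the support of exactly one eigenstate, as claimed.
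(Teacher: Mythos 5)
Your proof is correct and is precisely the standard Spekkens-style argument that the paper itself relies on: the paper gives no independent proof of this lemma, simply citing Ref.~\cite{Spe05} and remarking that the argument can be repeated within the stabilizer subtheory. The only point your write-up leaves implicit---but which is needed for the paper's application to models of the stabilizer subtheory rather than of full quantum theory---is that every auxiliary preparation you invoke (the maximally mixed states on the range and kernel of each projector, and the complement state $\sigma=(\mathbb{1}-\rho)/(d-1)$ of a stabilizer state $\rho$) is itself a stabilizer state, so the squeezing and support arguments indeed go through using only processes available in the subtheory.
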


This lemma was originally proven for full quantum theory, but the proof is easily repeated within the stabilizer subtheory.

\section{Proof of Main Theorem} \label{thmproof}
Here we prove Theorem~\ref{mainthm}.

We first give a one-paragraph intuitive proof sketch. Recall that the structure theorem of Ref.~\cite{schmid2020structure} gives a minimal frame representation as discussed in the main text. Starting with the single-particle case, we leverage the fact that noncontextuality implies outcome determinism to find a privileged labeling of the ontic states as points in phase space. We show that this labelling satisfies translational covariance. Using this and the fact that Weyl operators form a basis of the linear operators, we then show that the representation is fixed by the outcomes of measurements of Weyl operators on the $\lambda = (0,0)$ ontic state \footnote{We believe, but have not shown, that distinct GHW representations differ by exactly these choices of outcomes.}.
We give various conditions on these outcomes due to the Hermiticity of the phase point operator, the representation of the Hadamard, and from considering measurements of commuting pairs of Weyl operators. In odd dimensions, we show that the unique solution to these conditions is that which gives Gross's phase point operators. In even dimensions, we show that there is no solution. The generalization to multi-particle stabilizer subtheories is then shown to follow immediately.

We now give the full proof.

We start from the assumption that we have {\em some} nonnegative and diagram-preserving quasiprobability representation of the stabilizer subtheory in some finite dimension $d$. Note that this subtheory is tomographically local, and has GPT dimension $d^2$. Hence, Corollary VI.2 of Ref.~\cite{schmid2020structure} implies that the number of elements in the sample space is exactly $d^2$. Since a nonnegative and diagram-preserving quasiprobability representation is equivalent to a noncontextual ontological representation, we will refer to the elements of the sample space as `ontic states'.

The structure theorems in Ref.~\cite{schmid2020structure} (in particular, Corollary~VI.2) imply that this representation is a {\em minimal frame representation}~\cite{ferrie2008frame} composed of a basis $\{F_{\lambda}\}_{\lambda}$ and its dual $\{D_{\lambda}\}_{\lambda}$ (in the sense of Lemma~\ref{lemmaunique},) such that
the representation of a completely positive trace preserving map $\mathcal{E}$ is given by the conditional quasiprobability distribution
\beq \label{channelrepn}
\xi_{\mathcal{E}}({\lambda'}|\lambda) = \tr{D_{\lambda'}\mathcal{E}(F_\lambda)}.
\eeq
Here, 
$\{F_{\lambda}\}_{\lambda}$ is a spanning and linearly independent set of $d^2$ Hermitian operators, as is $\{D_{\lambda}\}_{\lambda}$, where these satisfy
\begin{align}
    \tr{F_\lambda} &= 1, \label{framesum}\\
    \sum_\lambda D_\lambda &= \mathds{1},
\end{align}
and
\beq
\tr{D_{\lambda'} F_\lambda}=\delta_{\lambda\lambda'}.
\eeq
(Note, however, that the elements of each basis need not be pairwise orthogonal.) Given an $\{F_\lambda\}$, the $\{D_\lambda\}$ satisfying these conditions are unique, so to specify a representation it suffices to determine the $\{F_\lambda\}$, as we will now do.

Consider in particular the two stabilizer measurements corresponding to the $X^{\dagger}$ and $Z$ operators.
If we label the outcome of $X^{\dagger}$ by $p \in \mathbb{Z}_d $ and the outcome of $Z$ by $q\in \mathbb{Z}_d$, then by outcome determinism (Lemma~\ref{ODSM}), each ontic state corresponds to an ordered pair $(p,q)$. In fact, this correspondence is bijective, and hence we can choose a useful labelling of the ontic states, i.e. $\lambda \mapsto (p,q)$ (so that measurements of $X^{\dagger}$ reveal $p$ and measurements of $Z$ reveal $q$).  
To see that the correspondence is surjective, consider an eigenstate of $X$ with eigenvalue $\omega^{-p_1}$. The ontic states in the support of its representation must have $p=p_1$ so that the outcome of an $X^\dagger$ measurement is always $p_1$. Furthermore, a measurement of $Z$ on this eigenstate gives a uniformly random outcome $q$, and so the ontic states in the support of its representation must include {\em every} ontic state of the form $(p_1,q)$, for arbitrary $q\in\Z_d$. This holds for all $d$ eigenstates of $X$, and thus for all $p_1 \in \Z_d$. 
So for every pair $p,q$, there exists some ontic state (in the support of one of the eigenstates of $X$) which has $(p,q)$ as its label.  This establishes surjectivity.
Since the number ($d^2$) of ontic states is the same as the number of pairs $(p,q)$,  surjectivity implies bijectivity.

Next, we show that the assumed labelling forces the representation to manifestly satisfy translational covariance: that is, the unitary superoperator $\mathcal{W}_{p_1,q_1}(\cdot) := Z^{p_1}X^{q_1}(\cdot)\left(X^{q_1}\right)^\dagger \left(Z^{p_1}\right)^\dagger$ is represented by the permutation $(p,q) \rightarrow (p+p_1,q+q_1)$.
To see this, first recall that the representation of this unitary superoperator is necessarily a permutation, as shown in Lemma~\ref{lemrevers}.
Next, we determine the representation of the unitary superoperator $\mathcal{X}(\cdot):= X(\cdot)X^\dagger$.  Consider an eigenstate of $X$ with eigenvalue $\omega^{-p_1}$. We argued above that the ontic states in the support of its representation must have $p=p_1$. Because the state is invariant under the unitary  superoperator  $\mathcal{X}$, the value of $p$ must be unchanged by it. Similarly, consider an eigenstate of $Z$ with eigenvalue $\omega^{q_1}$. The ontic states in the support of its representation must have $q=q_1$.
Applying the unitary superoperator $\mathcal{X}$ increments the $Z$ eigenstate and corresponding eigenvalue by one, so that the value of $q$ is transformed to $q_1+1$. Hence, we see that the representation of the unitary superoperator $\mathcal{X}$ takes $p \rightarrow p$ and $q\rightarrow q+1$, which fully specifies its action as a permutation on the ontic states. (Note that this argument only holds for ontic states in the support of one of the $X$ eigenstates and also in the support of one of the $Z$ eigenstates. But by Lemma~\ref{ODSM}, every ontic state is of this sort.) By a similar argument, the representation of the  unitary superoperator $\mathcal{Z}(\cdot):=Z(\cdot)Z^\dagger$ takes $p \rightarrow p+1$ and $q\rightarrow q$. Since all Weyl unitary superoperators can be generated by composing $\mathcal{X}$ and $\mathcal{Z}$, and since the representation is diagram-preserving, this fully specifies the permutations representing all of the Weyl unitary superoperators.  In particular, the unitary superoperator $\mathcal{W}_{p_1,q_1}$ is indeed represented by the permutation $(p,q) \mapsto (p+p_1,q+q_1)$.

By a similar argument, we can deduce the representation of the Hadamard unitary superoperator $\mathcal{H}(\cdot):=H(\cdot)H^\dagger$, where $H$ is defined in Eq.~\eqref{Hadamarddefn}. In particular, if we start in the eigenstate of $X$ with eigenvalue $\omega^{p_1}$, then $p = -p_1$, and the Hadamard maps this to the eigenstate of $Z$ with eigenvalue  $\omega^{p_1}$, for which $q = p_1$. So we see that the permutation representing the Hadamard superoperator 
results in a final value for $q$ equal to the initial value of $-p$.
Similarly, for the eigenstate of $Z$ with eigenvalue $\omega^{q_1}$, one has $q = q_1$, and this is mapped to the eigenstate of $X$ with eigenvalue $\omega^{-q_1}$, for which $p = q_1$. So we see that the permutation representing the Hadamard superoperator also
results in a final value for $p$ equal to the initial value of $q$.
This fully specifies its action as a permutation on the ontic states, namely $(p,q) \mapsto (q,-p)$\footnote{In odd dimension, this is the Clifford covariant representation.}. In particular, $(0,0) \mapsto (0,0)$.



 Recall that the Weyl operators are defined as 
\begin{equation}
    W_{p,q} = Z^p X^q.
\end{equation}
Now, since these operators (or their conjugates) are a basis for the complex vector space of linear operators on the Hilbert space, we can decompose the operator $F_{0,0}$ (namely, the element of the basis $\{F_{\lambda} \}_\lambda$ with $\lambda = (0,0)$) as
\beq \label{expandF00}
F_{0,0} = \frac1d \sum_{p,q} f_{p,q} W^\dagger_{p,q}
\eeq

Consider a measurement of a given Weyl operator $W_{p_1,q_1}$ when the ontic state happens to be $(0,0)$. By outcome determinism (Lemma \ref{ODSM}), we will always get a particular outcome, which we will label $v_{p_1,q_1}$. $W_{p_1,q_1}$ has a spectral decomposition $\sum_{ \alpha} \omega^\alpha \Pi^{p_1,q_1}_\alpha$ in terms of its eigenvalues $\omega^\alpha$ for $\alpha \in \mathbb{R}$ and the projectors $\Pi^{p_1,q_1}_\alpha$ onto the corresponding eigenvectors. Computing the quantity $\tr{F_{0,0}W_{p_1,q_1}}$, we obtain
\begin{equation} \label{doublequant}
    \tr{F_{0,0}W_{p_1,q_1}} = \sum_\alpha \omega^\alpha \tr{F_{0,0}\Pi^{p_1,q_1}_\alpha}.
\end{equation}
But we know that $\tr{F_{0,0}\Pi^{p_1,q_1}_\alpha}$ is the probability of outcome $\alpha$ occurring in a measurement of $W_{p_1,q_1}$ when the ontic state is $(0,0)$, and we have already defined that the outcome that must occur in this case is that corresponding to eigenvalue $\omega^{v_{p_1,q_1}}$.
It follows that $\tr{F_{0,0}W_{p_1,q_1}} =\omega^{v_{p_1,q_1}}$. 

But a substitution of Eq.~\eqref{expandF00} into the left-hand side of Eq.~\eqref{doublequant} also allows us to compute this value as
\begin{equation} \label{computeoneway}
    \tr{F_{0,0}W_{p_1,q_1}} = \frac1d \sum_{p,q} f_{p,q} \tr{W^\dagger_{p,q} W_{p_1,q_1}} = f_{p_1,q_1},
\end{equation}
where the last equality follows from Eq.~\eqref{WeylON}.
Hence $f_{p,q}=\omega^{v_{p,q}}$, and so
\begin{equation} \label{newexpandF00}
    F_{0,0} = \frac1d \sum_{p,q}\omega^{v_{p,q}}W^\dagger_{p,q}.
\end{equation}

Using $X^q Z^p = \omega^{-pq}Z^p X^q$ we can calculate 
\begin{equation}
    W_{p,q}^\dagger = X^{-q}Z^{-p} = \omega^{-pq}Z^{-p}X^{-q} = \omega^{-pq}W_{-p,-q},\label{arbWdagger1}
\end{equation}
so that
\begin{equation}
    W_{p,q} = \omega^{pq}W_{-p,-q}^\dagger.\label{arbWdagger}
\end{equation}
We require $F_{0,0}$ to be Hermitian, i.e.
\begin{equation}
  \frac1d \sum_{p,q} \omega^{v_{p,q}}W^\dagger_{p,q} =  F_{0,0} = F^\dagger_{0,0} = \frac1d \sum_{p,q} \omega^{-v_{p,q}} W_{p,q}
\end{equation}
and so by using Eq.~\eqref{arbWdagger} equating the phases in front of $W^\dagger_{p,q}$ this becomes
\begin{equation}
v_{p,q} = -v_{-p,-q} + pq.\label{arbWherm}
\end{equation}
Since $\omega^d = 1$ we actually only have this equation mod $d$, but we leave this implicit in any equations involving $v_{p,q}$.

For the Hadamard we have
\begin{multline}
HW^\dagger_{p,q}H^\dagger = HX^{-q}Z^{-p}H^\dagger = Z^{-q}X^{p} = \omega^{-pq}X^p Z^{-q} \\= \omega^{-pq}W^\dagger_{q,-p}.
\end{multline}

Hence $(0,0) \xmapsto{H} (0,0)$, i.e. $F_{0,0} = H F_{0,0} H^\dagger$, becomes
\begin{equation}
v_{p,q} = v_{-q,p} + pq. \label{arbWcovH}
\end{equation}
Applying this twice gives
\begin{equation}
    v_{p,q} = v_{-p,-q}.\label{arbWcovHH}
\end{equation}
Summing Eqs.~\eqref{arbWherm} and \eqref{arbWcovHH} gives
\begin{equation}
    2v_{p,q} = pq.\label{arbWcondition}
\end{equation}

Now consider a pair of commuting Weyl operators $W_{p,q}$ and $W_{p',q'}$, where the requirement that they commute can be expressed as $pq' - qp'=0$. They are jointly measurable, and give outcomes $v_{p,q}$ and $v_{p',q'}$ on the $(0,0)$ ontic state. Their product $W_{p,q}W_{p',q'}$ is also jointly measurable with both. It is a general feature of quantum theory that if measurements of some commuting $A$ and $B$ give eigenvalues $a$ and $b$, a measurement of their product $AB$ gives eigenvalue $ab$. Here we have $a = \omega^{v_{p,q}}$ and $b = \omega^{v_{p',q'}}$ so the outcome of $W_{p,q}W_{p'q'}$ on the $(0,0)$ ontic state must also be $v_{p,q} + v_{p',q'}$. But
\begin{multline}
    W_{p,q}W_{p'q'} = Z^p X^q Z^{p'} X^{q'} \\= \omega^{-p'q} Z^{p+p'} X^{q+q'} = \omega^{-p'q} W_{p+p',q+q'}.
\end{multline}

Since the outcome of $W_{p+p',q+q'}$ on $(0,0)$ is $v_{p+p',q+q'}$, this gives the outcome of $W_{p,q}W_{p'q'}$ as $v_{p+p',q+q'} - p'q$. But we already established that this outcome must be $v_{p,q} + v_{p',q'}$, so that
\begin{equation}
    v_{p+p',q+q'} = v_{p,q} + v_{p',q'} + p'q \label{arbWsum}
\end{equation}
for all such $(p,q,p',q')$. 

In the special case when $p' = p$ and $q'=q$ the commutation condition is clearly satisfied, and hence
\begin{equation}
    v_{2p,2q} = 2v_{p,q} + pq.
\end{equation}
Then we can apply Eq.~\eqref{arbWcondition} to obtain
\begin{equation}
    v_{2p,2q} = 2pq. \label{arbWevenpq}
\end{equation}

We now consider three cases, depending on the dimension $d$.

\subsection{Odd $d$}

In odd $d$ we have $W_{p,q}^d = (Z^pX^q)^d = \mathds{1}$ \cite{hostens}, so that the eigenvalues of $W_{p,q}$ are $d$-th roots of unity. Hence the $v_{p,q} \in \Z_d$.
In odd $d$, $\Z_d$ contains a unique inverse of $2$ so we can multiply each side of Eq.~\eqref{arbWcondition} by $2^{-1}$ to obtain the unique solution
\begin{equation}
    v_{p,q} = 2^{-1}pq\label{arbWgross}
\end{equation}
Hence $F_{0,0} = \frac1d \sum_{p,q} \left(\omega^{-2^{-1}pq}W_{p,q}\right)^\dagger$ is Gross's phase point operator.

Furthermore, we already argued that our representation must satisfy translation  covariance, which is satisfied if and only if $F_{p,q} = W_{p,q} F_{0,0} W^\dagger_{p,q}$; since Gross's representation also satisfies translation covariance its $F_{p,q}$ are likewise. Hence, the set of basis operators $\{F_\lambda \}_\lambda=\{F_{p,q} \}_{p,q}$ is exactly equal to the set of phase point operators in Gross's representation. 

Hence, any nonnegative and diagram-preserving quasiprobability representation for the stabilizer subtheory in odd dimensions is equivalent to Gross's.

\subsection{Even $d$, not a multiple of $4$}
In even $d$, there are values of $p,q$ for which $(Z^pX^q)^d \neq \mathds{1}$ \cite{hostens}, so the above argument for $v_{p,q} \in \Z_d$ is not applicable. We do have $(Z^p X^q)^{2d} = \mathds{1}$, so the eigenvalues are $2d$-th roots of unity, which can be represented in our convention by allowing half-integer $v_{p,q}$. For simplicity, we allow arbitrary $v_{p,q} \in \mathbb{R}$ in the following, remembering that the $v_{p,q}$ only appear as exponents of $\omega$ and so any equations involving them are still mod $d$.

If $d$ is even but not a multiple of $4$ then we can write $d = 2h$ where $h$ is odd and $h = -h$ mod $d$. If we set $p = q = h$ then the Hadamard covariance condition in Eq.~\eqref{arbWcovH} becomes
\begin{equation}
    v_{h,h} = v_{h,h} + h^2
\end{equation}
so that $h^2 = 0 \mod d$. But as $h$ is odd, we have $h = 1 \mod 2$ and so, multiplying this equation by $h$ and using that $2h=d$ we find that $h^2 = h \mod 2h  = h \mod d$.  We therefore have a contradiction, and hence there are no valid models in this case. 

\subsection{$d$ a multiple of $4$}
The remaining case is that $d$ is a multiple of $4$, i.e. $d = 4r$ for some non-zero $r$. If we set $(p,q) = (0,2)$ and $(p',q') = (2r,2(r-1))$ then $pq' - qp' = -4r = -d = 0$ so that we can apply Eq.~\eqref{arbWsum} to obtain
\begin{equation}
    v_{2r,2r} = v_{0,2} + v_{2r,2(r-1)}.
\end{equation}
Applying Eq.~\eqref{arbWevenpq} to each term this becomes
\begin{equation}
    2r^2 = 0 + 2r(r-1),
\end{equation}
so that $2r = 0$. But $2r \neq 0$, so there is no valid model in this $d$ either. Together with the previous case this establishes there are no valid models in any even dimension.

\subsection{Multipartite cases}

The multipartite generalization of these results follows immediately from the fact that the assumption of diagram preservation applies to parallel composition (not only sequential composition).
More explicitly, it follows from Proposition VI.6 of 
Ref.~\cite{schmid2020structure}, which implies 
that the frame representation for processes on a pair of systems is uniquely determined by the frame representation for processes on each component system. 
In the case that the component systems are odd-dimensional, they each have a unique representation, and hence, so too does the composite system. 
In the case that the component systems are even-dimensional, they do not admit of any noncontextual representation, and hence, neither does the composite system.

\section{Alternative arguments for the necessity of generalized contextuality} \label{alternativeproofs}
Recall Theorem~\ref{compthm} from the main text: for any state $\rho$ which promotes the stabilizer subtheory to universal quantum computation, there is no generalized noncontextual model for the stabilizer subtheory together with $\rho$.

One might expect that this result follows immediately from the fact that there is no nonnegative quasiprobability representation of full quantum theory, and that such a proof would hold in every model of quantum computation. However, the mere fact that a universal quantum computer can {\em simulate} every quantum circuit does not necessarily imply that one can {\em implement} every quantum circuit. (The loophole here follows from the distinction between computational universality and strict universality~\cite{aharonov2003simple}. For example, the Toffoli and Hadamard gate together form a computationally universal gate set, and yet composition of these two gates cannot generate arbitrary unitary gates---only those with real matrix elements.) 
Hence, one cannot without further arguments conclude that a universal quantum computer is capable of implementing circuits with negativity (or contextuality)---one can only conclude that it can simulate such circuits.

 However, Theorem~\ref{compthm} can be proven by leveraging the previous necessity result for Kochen-Specker contextuality~\cite{howard2014contextuality} together with the fact that Kochen-Specker contextuality implies generalized contextuality~\cite{kunjwal2019beyond,kunjwal2018from}. This argument is not entirely immediate, insofar as the latter implication requires bringing auxiliary operational processes into the argument, and one must establish that all of these additional processes are within the stabilizer subtheory. However, this can be shown to be the case. 
First, one establishes outcome determinism for ontic states in the support of the maximally mixed state following the logic of Ref.~\cite{Spe05}, but using only stabilizer preparations. One then establishes that every ontic state in the support of the given nonstabilizer state (from the state-dependent proof of Ref.~\cite{howard2014contextuality}) is also in the support of the maximally mixed state, using the fact that there always exists a decomposition of the maximally mixed state into the given nonstabilizer state together with {\em only} stabilizer states.

\end{document}